\DeclareMathOperator*{\argmax}{arg\,max}
\newtheorem{prop}{Proposition}
\keywords{Recommendation, Negative Sampling, Automated Machine Learning}
\begin{document}
\title{Towards Automated Negative Sampling in Implicit Recommendation}

\author{Fuyuan Lyu}
\authornote{This work was done when the author worked as an intern at Huawei Noah's Ark Lab, Canada.}
\affiliation{
  \institution{McGill University}
  \city{Montreal}
  \country{Canada}
}
\email{fuyuan.lyu@mail.mcgill.ca}

\author{Yaochen Hu}
\affiliation{
  \institution{Huawei Noah's Ark Lab}
  \city{Montreal}
  \country{Canada}
}
\email{yaochen.hu@huawei.com}

\author{Xing Tang}
\affiliation{
  \institution{Huawei Noah's Ark Lab}
  \city{Shenzhen}
  \country{China}
}
\email{xing.tang@huawei.com}

\author{Yingxue Zhang}
\affiliation{
  \institution{Huawei Noah's Ark Lab}
  \city{Montreal}
  \country{Canada}
}
\email{yingxue.zhang@huawei.com}

\author{Ruiming Tang}
\affiliation{
  \institution{Huawei Noah's Ark Lab}
  \city{Shenzhen}
  \country{China}
}
\email{tangruiming@huawei.com}

\author{Xue Liu}
\affiliation{
  \institution{McGill University}
  \city{Montreal}
  \country{Canada}
}
\email{xueliu@cs.mcgill.ca}

\renewcommand{\shortauthors}{Fuyuan Lyu et al.}

\begin{abstract}
Negative sampling methods are vital in implicit recommendation models as they allow us to obtain negative instances from massive unlabeled data. 
Most existing approaches focus on sampling hard negative samples in increasingly complex ways. These studies tend to be orthogonal towards the recommendation model and implicit datasets. 
However, such an idea contradicts the common belief in automated machine learning that the model and dataset should be matched to achieve better performance.
Empirical experiments suggest that the best-performing negative sampler depends on the implicit dataset and the specific recommendation model. 
Hence, we propose a hypothesis that the negative sampler needs to be aligned with the capacity of the recommendation models as well as the statistics of the datasets to achieve optimal performance. A mismatch between these three would likely results in sub-optimal outcomes. 
An intuitive idea to address the mismatch problem is to exhaustively or manually search for the best-performing negative sampler given the model and dataset. However, such an approach is computationally expensive and time-consuming. Consequently, the efficient search for the optimal negative sampler remains an unsolved problem.
In this work, we propose the AutoSample framework that can automatically and adaptively select the best-performing negative sampler among a set of candidates. Specifically, we propose a loss-to-instance approximation to transform the negative sampler search task into the learning task over a weighted sum, enabling end-to-end training of the model. We also design an adaptive search algorithm to extensively and efficiently explore the search space, which is hard to model directly. A specific initialization approach is also obtained to better utilize the obtained model parameters during the search stage, which is similar to curriculum learning and leads to better performance and less computation resource consumption.
We evaluate the proposed framework on four real-world benchmark datasets over three widely-adopted basic models. Extensive experiments demonstrate the effectiveness and efficiency of our proposed framework. Further ablation studies are also conducted to deepen the understanding of our AutoSample framework from various perspectives.
\end{abstract}

\maketitle
\section{Introduction}

Collaborative Filtering, as the core component for personalized recommender systems, focuses on learning user preference from observed user-item interactions~\cite{BPR,AOBPR}. Nowadays, recommender systems heavily rely on implicit user feedback, such as purchases on E-commerce sites and views on content platforms. Implicit feedback is more accessible than explicit ones (such as rating) yet more challenging to utilize. Most datasets from implicit feedback only contain positive interactions (e.g., clicking an advertisement, finishing watching a video, etc. ), while negative feedback is necessary for training implicit recommendation~\cite{BPR}. Therefore various techniques are widely adopted to sample negative instances from the unobserved user item interactions~\cite{BPR,AOBPR,DNS,PNS,MixGCF}. An example of such an implicit recommender system is visualized in Figure~\ref{fig:normal}, where the recommendation model takes both the positive instances from the implicit dataset and negative instances from the negative sampler.

\begin{figure}[!htbp]
    \centering
    \includegraphics[width=0.42\textwidth]{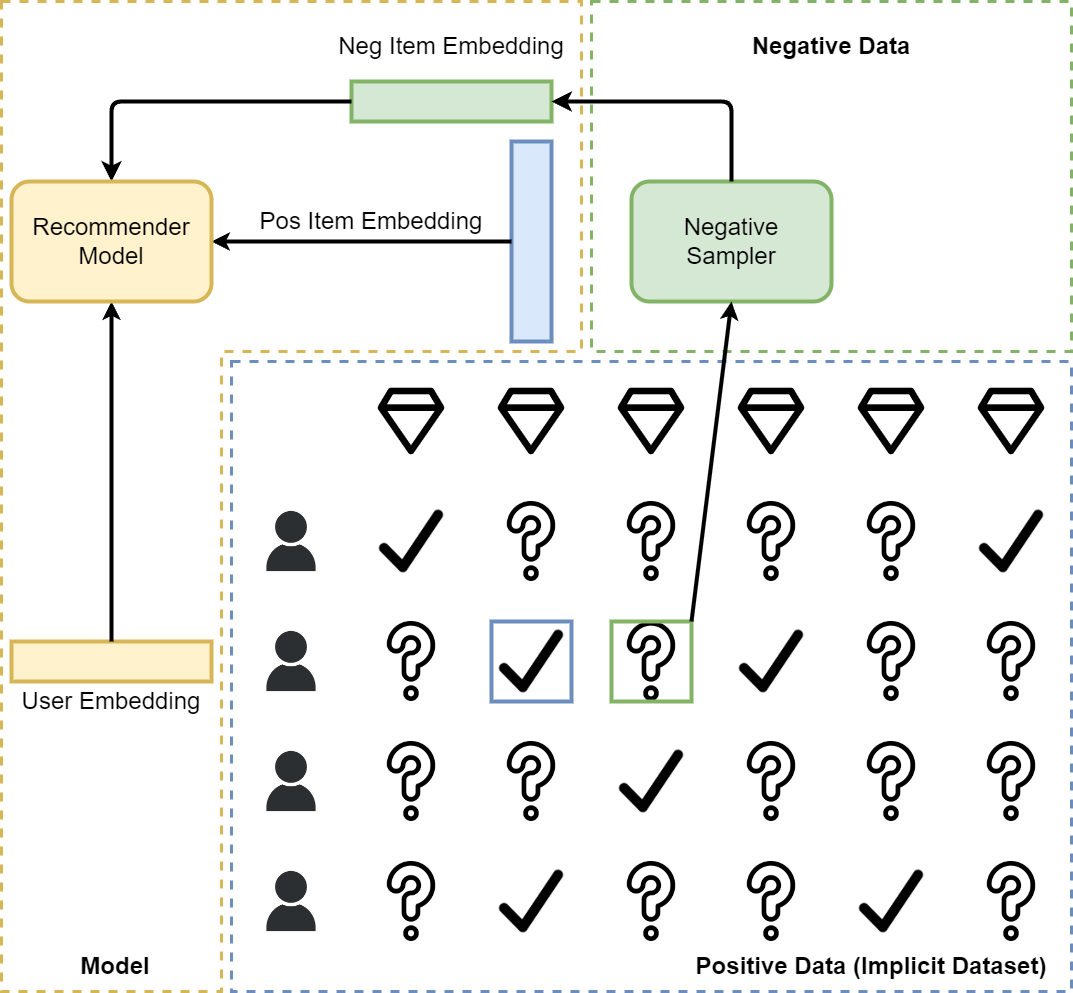}
    \caption{Illustration of An Implicit Recommender System.}
    \label{fig:normal}
\end{figure}

Existing negative sampling methods often replace the random sampling~\cite{BPR} with selecting hard negative instances from the pool of unobserved ones~\cite{AOBPR,DNS,SRNS}. Hard negative instances refer to instances that are difficult to distinguish for the recommendation model. Researchers employ various methods as the sampler for choosing hard negative instances, including but not limited to the recommendation model itself~\cite{DNS,AOBPR,SRNS}, the generative adversarial network~\cite{AdvIR,IRGAN} or graph model~\cite{MixGCF}. All these methods aim to identify instances that challenge the recommendation model's discriminatory ability.

However, these approaches often disregard the connection between the recommendation model, implicit datasets, and negative sampling methods. We challenge this idea as it contradicts the common belief in standard machine learning tasks, such as computer vision~\cite{DARTS} and natural language processing~\cite{nas-nlp}, where the model architecture should be typically selected to fit the target dataset to achieve good performance. Unlike these tasks, the implicit recommendation only contains the positive instances of the training dataset, as depicted in Figure \ref{fig:normal}, while the negative sampler generates the negative instances. Since the positive instances of the dataset are fixed, we hypothesize that the negative sampler should be aligned with the capacity of the recommendation models as well as the statistics of the dataset's positive part to achieve optimal performance. A mismatch among these three components would likely results in sub-optimal outcomes. This hypothesis can be empirically validated by the experimental result in Table \ref{tab:intuition}, where the best-performing negative sampler tends to depend on the dataset and model.

\begin{table}[!htbp]
    \centering
    \caption{The best-performing negative samplers in Table \ref{tab:main}.}
    \begin{tabular}{c|c|c|c|c}
    \hline
        Model/Dataset               & TB2014 & TB2015 & Alibaba & Amazon \\
    \hline
        MF~\cite{BPR}               & RNS   & RNS       & PNS       & PNS       \\
        LightGCN~\cite{LightGCN}    & RNS   & MixGCF    & RNS       & RNS       \\
        NGCF~\cite{NGCF}            & RNS   & MixGCF    & MixGCF    & MixGCF    \\
    \hline
    \end{tabular}
    \begin{tablenotes}
    \footnotesize
    \item [1] Here RNS, PNS and MixGCF indicates random sampler~\cite{BPR}, popularity-based sampler~\cite{PNS} and MixGCF~\cite{MixGCF}.
    \end{tablenotes}
    \label{tab:intuition}
\end{table}

An intuitive approach to selecting the suitable negative sampler given the recommendation model and target dataset is to perform an exhaustive and manual search for the best-performing one. However, such a solution would require significant computation resources and time. To mitigate the search cost, we propose an automated negative sampling problem that formulates the selection of negative samplers as a neural architecture search problem, as shown in Figure \ref{fig:auto}. 
Directly solving the automated negative sampling problem poses significant challenges, especially in maintaining the end-to-end training flow throughout the search process. To address this challenge and enable end-to-end training, we propose the \textit{instance-to-loss} transformation, which reformulates the problem as a weight-sum of multiple losses generated by different samplers.

\begin{figure}[!htbp]
    \centering
    \includegraphics[width=0.42\textwidth]{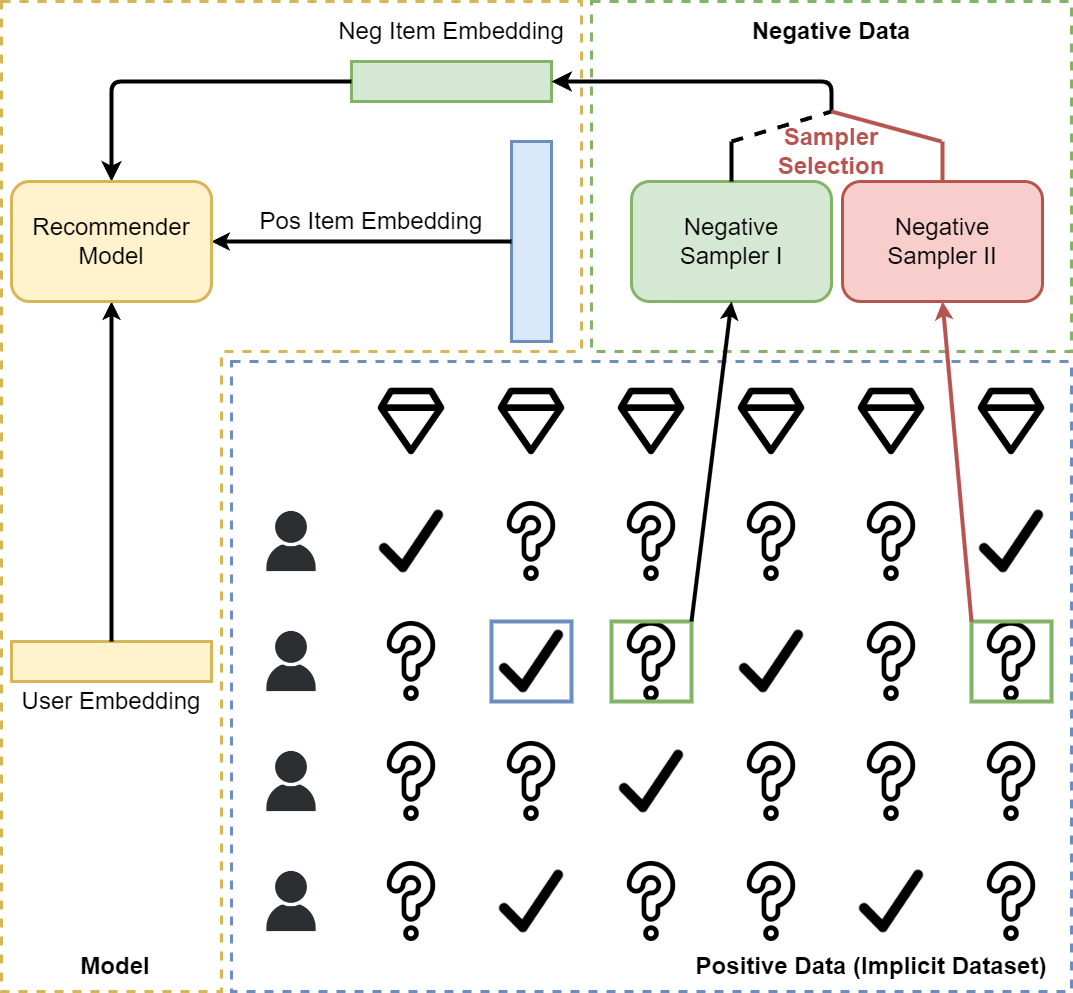}
    \caption{Illustration of Automated Negative Sampling Problem. We highlight the difference with Figure \ref{fig:normal} in Red.}
    \label{fig:auto}
\end{figure}

To tackle the automated negative sampling problem, we introduce a method named \textbf{AutoSample}, which automatically and adaptively selects the sampling strategies for different models and datasets. A gradient-based search algorithm is proposed to explore the search space efficiently. Additionally, inspired by curriculum learning, we propose a specialized retraining scheme to facilitate the model parameters during the search stage, leading to improved performance.
We summarize our major contributions as follows:
\begin{itemize}[topsep=0pt,noitemsep,nolistsep,leftmargin=*]
    \item This paper first proposes the hypothesis that the performance of negative samplers is dependent on the specific model and dataset, which challenges the existing research approach of developing the negative sampler without considering the model's capability or dataset's statistic. 
    \item This paper introduces the automated negative sampling problem, which can be viewed as a framework that can be combined with existing negative samplers. To ensure end-to-end trainability, we propose the instance-to-loss approximation. Additionally, we present a gradient-based search algorithm and a specialized retraining scheme to efficiently and effectively explore the search space.
    \item The extensive experiments are conducted on four public datasets and three basic models. The empirical results demonstrate the feasibility and effectiveness of our proposed framework.
\end{itemize}

\section{Methodology}
\label{sec:method}

In this section, we first formulate the implicit recommendation problem in Section~\ref{sec:method_formulation}. Then, we formulate the automated negative sampling problem in Section~\ref{sec:method_problem}. Finally, we introduce our method AutoSample in Section~\ref{sec:method_autosample}.

\subsection{Implicit Recommendation Formulation}
\label{sec:method_formulation}

Suppose we have a user set $\mathbf{U}$, including $M$ users, and an item set $\mathbf{I}$, including $N$ items. 
In the implicit recommendation settings, we only have the positive interaction set $\mathbf{D}^p \subseteq \mathbf{U}\times\mathbf{I}$ with each pair $(u, i)\in\mathbf{D}^p$ indicating user $u$ has interacted with item $i$. 
To facilitate the training, a negative interaction set $\mathbf{D}^n \subseteq \mathbf{D}_{-}$ needs to be sampled from the candidate space $\mathbf{D}_{-} \triangleq \mathbf{U} \times \mathbf{I} \ - \ \mathbf{D}^p$.
An implicit recommendation model generally consists of three components: 
(i) a scoring function $\mathcal{S}(u, i)$ to model the relevance between user $u$ and item $i$, 
(ii) a loss function $\mathcal{L}$ over the positive pair $(u, i)\in\mathbf{D}^p$ and negative pair $(u, j)\in\mathbf{D}^n$, and 
(iii) negative sampler $\pi(u, k, \mathbf{D}_{-})$ to generate the negative interaction set $\mathbf{D}^n_u = \{(u, j)| j\in \mathbf{I} - \mathbf{I}_u\}$ for each user $u$ with cardinality $|\mathbf{D}^n_u| = k$, where $\mathbf{I}_u=\{i |(u,i)\in\mathbf{D}^p\}$ is the set of iteracted items for user $u$. 
Eventually $\mathbf{D}^n\triangleq\cup_u\mathbf{D}^n_u$. Since we focus on the same candidate space $\mathbf{D}_-$, we omit $\mathbf{D_-}$ for simplicity as $\pi(u, k)$. 

\subsubsection{Scoring Function}
The scoring function $\mathcal{S}(u,i|W)$ aims to calculate the relevance score between user $u$ and item $i$, parameterized by $W$. There exist various scoring functions based on matrix factorization~\cite{BPR,AOBPR}, multiple-layer perceptron~\cite{NCF} and graph neural network~\cite{LightGCN,NGCF}. 

\subsubsection{Loss Function}
For the loss function $\mathcal{L}$, Bayesian Personalized Loss(BPR)~\cite{BPR} is widely adopted to measure the difference between positive and negative items given the sample user $u$. It can be formulated as follows:
\begin{equation}
\label{eq:loss_bpr}
\begin{aligned}
    \mathcal{L}(\mathbf{D}^p, \mathbf{D}^n, W) = -\sum_{(u,i) \in \mathbf{D}^p} \sum_{(u,j) \in \mathbf{D}^n_u} \ln \sigma \left(\mathcal{S}(u,i|W) - \mathcal{S}(u,j|W)\right),
\end{aligned}
\end{equation}
where $\sigma(x) = \frac{1}{1+ e^{-x}}$ indicates the sigmoid function. The overall idea for the above equation is to maximize the distance between positive and negative items given the user and scoring function.

\subsubsection{Negative Sampler}
The negative sampler $\pi(u,k)$ aims to select $k$ negative samples $\mathbf{D}_u^n$ for user $u$. 
We consider samplers that draw $k$ samples with the same distribution independently, so we only focus on the sampling process for one negative pair $(u,j)\in \mathbf{U} \times \mathbf{I} - \mathbf{D}_p$. 
To select one negative sample pair $(u, j)$, the user $u$ is first determined. Here we usually take $u$ for one positive pair $(u, i)$ to fit the BPR loss~\cite{BPR}. The selection of the negative item $j$, on the other hand, is based on the negative sampling distribution $\mathbf{p}_u$ over each candidate negative samples $j \in \mathbf{I} - \mathbf{I}_u$ for user $u$, where 
\begin{equation}
\label{eq:bpr_loss}
    \mathbf{p}_u(j) > 0, \ \sum_{j \in \mathbf{I} - \mathbf{I}_u} \mathbf{p}_u(j) = 1.
\end{equation}

Various sampling probability has been proposed to facilitate the training of the implicit recommendation model. The most commonly used probability is the uniform distribution~\cite{BPR}, where all candidate negative samples have an equal chance to be selected. However, this method usually suffers from low-quality samples~\cite{DNS,SRNS,MixGCF}. To solve this, various methods have been proposed. PNS~\cite{PNS} heuristically selects the negative samples based on the items' popularity. Hard negative sampling methods, like DNS~\cite{DNS} or AOBPR~\cite{AOBPR} explicitly favour the candidate samples with higher $\mathbf{S}(u,j)$, which contains more information. There also exist works like IRGAN~\cite{IRGAN} and AvdIR~\cite{AdvIR}, which simultaneously train a parameterized function $\mathbf{p}_u$ to minimize the loss in Equation ~\ref{eq:bpr_loss}, based on GAN. Different choices of $\mathbf{p}_u$ are listed in Table \ref{table:sampler}. 

Hence, the final training objective for implicit recommendation can be summarized as follows:
\begin{equation}
\label{eq:loss_rec}
    \min_W \ \mathcal{L}(\mathbf{D}^p, \mathbf{D}^n, W), 
\end{equation}
where $\mathbf{D}^n=\cup_u \mathbf{D}_u^n$ and $\mathbf{D}_u^n$ is generated by the negative sampler $\pi(u, k)$.




\subsection{Automated Negative Sampling}
\label{sec:method_problem}

In this section, we introduce the proposed automated negative sampling problem. The intuition is to adaptively and automatically find the most suitable negative sampler. Consider $T$ different candidate samplers $\pi_1(u, k), \pi_2(u,k),\ldots, \pi_T(u,k)$. To automatically search for the optimal sampler, we assign a learnable weight $\alpha_t\ge 0$ for each sampler $\pi_t(u,k)$. Derived from Equation \ref{eq:loss_rec}, the automated negative sampling can be formulated as:


\begin{align}
    \min_{W, \bm{\alpha}} \quad&\mathcal{L}(\mathbf{D}^p, \mathbf{D}^n, W), \label{eq:loss_sample}\\
    \text{s.t.} \quad& \mathbf{D}^n=\cup_u \mathbf{D}_u^n, \quad \mathbf{D}_u^n = \cup_t \mathbf{D}_{u,t}^n,\quad \forall u,\nonumber\\
    & \mathbf{D}_{u,t}^n \text{ is generated by } \pi_t(u, \alpha_t k), \quad \forall u, t\label{eq:sampling}\\
    & \sum_{t=1}^T \alpha_t = 1 \quad \text{and} \quad \alpha_t \ge 0, \ \forall t.\nonumber
\end{align}

Here $\mathbf{D}_{u,t}^n$ denotes the negative samples generated by sampler $\pi_t(u, \alpha_t k)$ for user $u$. The search parameter $\bm{\alpha} = [\alpha_1, \alpha_2, \cdots, \alpha_T]$ measures the candidates' contribution to the final negative interaction set $\mathbf{D}^n$. To determine $\bm{\alpha}$, the naive solution with an exhaustive or manual search of the continuous weight from a large search space is extremely costly. Besides, this formulation is intractable to simultaneously train  $W$ and $\bm{\alpha}$ in an end-to-end manner since the sampling process in Equation~\ref{eq:sampling} cut off the back-propagation path for parameter $\bm{\alpha}$, losing the useful gradient information.

\subsubsection{Instance-level to Loss-level Transformation}
To make the recommendation model end-to-end trainable with $\bm{\alpha}$, we propose a method to transform the loss function to take $\bm{\alpha}$ from the samplers (instance-level) to the loss (loss-level). Let $\hat{\mathbf{D}}_{u,t}^n$ denotes the negative samples generated by $\pi_t(u, k)$ for $\forall u, t$, and denote $\hat{\mathbf{D}}_{t}^n = \cup_u\hat{\mathbf{D}}_{u,t}^n$. Then $\mathcal{L}(\mathbf{D}^p, \mathbf{\hat{D}}^n_t, W)$ is the loss with only sampler $\pi_t(u, k)$. We have

\begin{prop} \label{prop:transform}
Given the set of mutually independent samplers $\{\pi_t(u, k) | t = 1, 2, \ldots, T\}$, we have
    \begin{align}
        & \mathbb{E}_{\{\pi_t(u, \alpha_tk)|\forall u,t\}}\mathcal{L}(\mathbf{D}^p, \mathbf{D}^n, W)\nonumber\\
        = &\sum_t \alpha_t \mathbb{E}_{\{\pi_t(u, k)|\forall u\}} \mathcal{L}(\mathbf{D}^p, \mathbf{\hat{D}}^n_t, W)
    \end{align}
\end{prop}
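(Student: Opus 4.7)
The plan is to exploit the fact that the BPR loss in Equation~\ref{eq:loss_bpr} is additive across negative samples, so that expectation commutes with the sum and the identity reduces to matching the contribution of each positive pair, one sampler at a time. Abbreviate $\ell_{u,i,j}(W) = -\ln\sigma(\mathcal{S}(u,i|W)-\mathcal{S}(u,j|W))$, so that for any negative set with per-user slices $\mathbf{D}^n_u$,
\[
\mathcal{L}(\mathbf{D}^p,\mathbf{D}^n,W) \;=\; \sum_{(u,i)\in\mathbf{D}^p}\ \sum_{j:(u,j)\in\mathbf{D}^n_u} \ell_{u,i,j}(W).
\]
Because $W$ is held fixed while the sampling is the only source of randomness, each $\ell_{u,i,j}(W)$ is deterministic given the draw of $j$, and by linearity of expectation it is enough to analyze the inner sum for a single $(u,i)$.

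On the LHS, the constraint $\mathbf{D}^n_u = \cup_t \mathbf{D}^n_{u,t}$ from Equation~\ref{eq:sampling}, with $|\mathbf{D}^n_{u,t}| = \alpha_t k$ drawn i.i.d.\ from the distribution $\mathbf{p}_u^{(t)}$ attached to $\pi_t$, splits the inner sum as
\[
\sum_{j:(u,j)\in\mathbf{D}^n_u} \ell_{u,i,j}(W) \;=\; \sum_{t=1}^T \sum_{j:(u,j)\in\mathbf{D}^n_{u,t}} \ell_{u,i,j}(W),
\]
and the mutual-independence hypothesis lets the joint expectation over $\{\pi_t(u,\alpha_t k)\mid \forall u,t\}$ factor term by term. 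Since each $\pi_t(u,\alpha_t k)$ produces $\alpha_t k$ i.i.d.\ draws from $\mathbf{p}_u^{(t)}$, the per-$(u,i)$ contribution collapses to $\sum_t \alpha_t k\cdot\mathbb{E}_{j\sim\mathbf{p}_u^{(t)}}\ell_{u,i,j}(W)$. Applying exactly the same i.i.d.-and-linearity step to the RHS gives
\[
\mathbb{E}_{\pi_t(u,k)} \sum_{j:(u,j)\in\hat{\mathbf{D}}^n_{u,t}} \ell_{u,i,j}(W) \;=\; k\cdot\mathbb{E}_{j\sim\mathbf{p}_u^{(t)}}\ell_{u,i,j}(W),
\]
so multiplying by $\alpha_t$ recovers the $t$-th slice of the LHS; summing over $t$ and then over $(u,i)\in\mathbf{D}^p$ closes the identity.

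The algebra is essentially bookkeeping, so the main obstacle is not the computation but an implicit integrality issue: writing $\pi_t(u,\alpha_t k)$ presumes $\alpha_t k\in\mathbb{Z}_{\ge 0}$, which is generally false for arbitrary $\bm{\alpha}$ on the simplex. I would handle this either by restricting to rational $\bm{\alpha}$ with $\alpha_t k\in\mathbb{Z}_{\ge 0}$ and invoking continuity of both sides in $\bm{\alpha}$, or, more cleanly, by reinterpreting the decomposition $\mathbf{D}^n_u=\cup_t \mathbf{D}^n_{u,t}$ as a multinomial allocation of the $k$ samples across $t$ with probabilities $\alpha_t$; then $|\mathbf{D}^n_{u,t}|$ has mean $\alpha_t k$ and the identity holds exactly in expectation via the tower property. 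I would also make explicit that ``mutual independence'' here is independence of the draws across $t$ (and implicitly across $u$), which is precisely the minimal hypothesis that lets the joint expectation factor in the decisive step above.
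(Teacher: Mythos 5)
Your proof is correct and follows essentially the same route as the paper's: decompose the BPR loss additively over the per-sampler slices $\mathbf{D}^n_{u,t}$, apply linearity of expectation together with the independence of the samplers, and then use the i.i.d.\ nature of the $k$ draws to trade $\alpha_t k$ draws for a factor of $\alpha_t$ times the expectation over $k$ draws. Your further reduction to single-draw expectations $\mathbb{E}_{j\sim\mathbf{p}_u^{(t)}}\ell_{u,i,j}(W)$ and your explicit handling of the integrality of $\alpha_t k$ (via continuity or a multinomial allocation) are refinements the paper leaves implicit, but they do not change the underlying argument.
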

\begin{proof}
Denote $\delta_{uij} \triangleq \ln \sigma(\mathcal{S}(u,i|W) - \mathcal{S}(u,j|W))$. We have
\begin{align}
 & \mathbb{E}_{\{\pi_t(u, \alpha_tk)|\forall u,t\}}\mathcal{L}(\mathbf{D}^p, \mathbf{D}^n, W)\nonumber\\
 = &\mathbb{E}_{\{\pi_t(u, \alpha_tk)|\forall u,t\}}\left(-\sum_{(u,i) \in \mathbf{D}^p} \sum_{(u,j) \in \mathbf{D}^n_u} \delta_{uij}\right) \quad \text{(by (\ref{eq:loss_bpr}))}\nonumber\\
 = &\mathbb{E}_{\{\pi_t(u, \alpha_tk)|\forall u,t\}}\left(-\sum_{(u,i) \in \mathbf{D}^p}\sum_t \sum_{(u,j) \in \mathbf{D}^n_{u,t}} \delta_{uij}\right)\nonumber\\
 =& \sum_t \mathbb{E}_{\{\pi_t(u, \alpha_tk)|\forall u\}}\left(-\sum_{(u,i) \in \mathbf{D}^p} \sum_{(u,j) \in \mathbf{D}^n_{u,t}} \delta_{uij}\right)\label{eq:shift_proof_1}\\
 =& \sum_t \alpha_t\mathbb{E}_{\{\pi_t(u, k)|\forall u\}}\left(-\sum_{(u,i) \in \mathbf{D}^p} \sum_{(u,j) \in \hat{\mathbf{D}}^n_{u,t}} \delta_{uij}\right)\label{eq:shift_proof_2}\\
 =& \sum_t \alpha_t \mathbb{E}_{\{\pi_t(u, k)|\forall u\}} \mathcal{L}(\mathbf{D}^p, \mathbf{\hat{D}}^n_t, W). \nonumber
\end{align}
Equtation \ref{eq:shift_proof_1} comes from the linearity of expectation and the independence of the samplers. Equtation \ref{eq:shift_proof_2} comes from the fact that $\pi_t(u, k)$ draws $k$ samples independantly for $\forall u, t$.
\end{proof}

Given Proposition~\ref{prop:transform}, we can safely rewrite Problem~\ref{eq:loss_sample} as
\begin{align}
    \min_{W, \bm{\alpha}} \quad&\sum_t \alpha_t \mathcal{L}(\mathbf{D}^p, \mathbf{\hat{D}}^n_t, W), \label{eq:loss_loss}\\
    \text{s.t.} \quad& \mathbf{\hat{D}}^n_t=\cup_u \mathbf{\hat{D}}_{u,t}^n, \forall u,\nonumber\\
    & \mathbf{\hat{D}}_{u,t}^n \text{ is generated by } \pi_t(u, k), \quad \forall u, t \nonumber\\
    & \sum_{t=1}^T \alpha_t = 1 \quad \text{and} \quad \alpha_t \ge 0, \ \forall t.\nonumber
\end{align}
where $\bm{\alpha}$ can be trained in an end-to-end manner. Proposition~\ref{prop:transform} guarantees that this substitution is a decent replacement for the original loss.

\begin{table}[htbp]
    \caption{Comparison of Different Negative Samplers.}
    \centering
    \begin{tabular}{cccc}
    \hline
        $\pi$ & $\mathbf{p}_u(j)$ & Category \\
    \hline
        RNS~\cite{BPR}      & $= |\mathbf{I} - \mathbf{I}_u|^{-1}$  & Heuristic      \\
        PNS~\cite{PNS}      & $\propto (\text{pop}_j)^\beta$       & Heuristic      \\
        AOBPR~\cite{AOBPR}  & $\propto \exp(-\text{grk}(u,j)/\lambda)$ & Hard Nega  \\
        DNS~\cite{DNS}      & $\propto \exp(-\text{lrk}(u,j)/\lambda)$ & Hard Nega  \\
        IRGAN~\cite{IRGAN}  & learnable     & Adversarial Hard Nega  \\   
        AdvIR~\cite{AdvIR}  & learnable     & Adversarial Hard Nega  \\
        SRNS~\cite{SRNS}    & learnable     & Hard Nega \\
        MixGCF~\cite{MixGCF}& learnable     & Graph-based Hard Nega \\
    \hline
    \end{tabular}
    \begin{tablenotes}
    \footnotesize
    \item [1] Here \textit{grk()} and \textit{lrk()} refer to global and local ranking, respectively. \textit{pop} stands for popularity.
    \end{tablenotes}
    \label{table:sampler}
\end{table}

\subsection{AutoSample}
\label{sec:method_autosample}
In this section, we propose AutoSample, which addresses the automated negative sampling problem discussed in the previous section. Following the convention from the classical neural architecture search problem~\cite{DARTS}, we introduce the search space, the search algorithm, the search process and the re-training process, respectively.

\subsubsection{Search Space}
\label{sec:method_search_space}
The search space determines which sampler needs to be taken as a potential candidate. In this experiment, we select the candidate based on two criteria: (i) sampling performance and (ii) sampling efficiency. The sampling performance is usually measured by the metrics given the corresponding sampler, whereas the sampling efficiency is important as it may affect the total training time, especially when the dataset is large. The general selection criteria for candidate negative samplers are that we select the top-T best-performing negative samplers as candidates. We discuss the influence of the search space empirically in Section \ref{sec:search_space}. Notes that, to improve model efficiency, we only select negative samplers from classic negative samplers with relatively high efficiency, specifically RNS~\cite{BPR}, PNS~\cite{PNS}, DNS~\cite{DNS} and AOBPR~\cite{AOBPR}.

\subsubsection{Search Algorithm}
\label{sec:method_search_algorithm}
In this section, we illustrate the search algorithm for efficiently exploring the search space. Instead of exhaustively searching for the optimal search parameter, which makes the whole framework not end-to-end differentiable, we approximate the search of negative sampler via the Gumbel-softmax operation~\cite{Gumbel-Softmax} in this work. The Gumbel-softmax operation provides a differentiable sampling, which makes the parameters $\bm{\alpha}$ learnable.

To be specific, suppose architecture parameters $\{\alpha_{t} |t =1, 2,\ldots, T\}$ are the search probability over different negative samplers, $\Pi\triangleq \{\pi_t|\forall t\}$ indicates the search space over all possible negative sampling methods. Then a discrete selection $z$ can be drawn via the Gumbel-softmax trick\cite{Gumbel-Softmax-dist} as

\begin{equation}
\begin{aligned}
    & z = \text{onehot} \left(\argmax_{\pi_t \in \Pi} [ \log \alpha_{t} + g_{t} ] \right), \\
    & g_{t} = -\log(-\log(u_{t})), \\
    & u_{t} \sim \text{Uniform}(0,1).
\end{aligned}
\end{equation}

The independent and identically distributed (i.i.d) \textit{gumbel noises} $g_t$ disturb the {$\log \alpha_t$} terms. Also, they make the $\argmax$ operation equivalent to drawing a sample from the search weight $\alpha_1, \alpha_2, \cdots \alpha_T$. 
However, because of the $\argmax$ operation, this sampling method is non-differentiable. 
We tackle this problem by straight-through Gumbelsoftmax~\cite{Gumbel-Softmax-dist}, which leverages a softmax function as a differentiable approximation to the $\argmax$ operation:
\begin{equation}
\label{eq:probability}
    p_t = \frac{exp((\log (\alpha_t) + g_t) / \tau)}{\sum_{s=1}^T exp((\log (\alpha_s) + g_s) / \tau)}, \quad \forall t,
\end{equation}

where $p_t$ denotes the probability of selecting $t$-th negative sampler $\pi_t$. The temperature $\tau$ is the temperature parameter to control the smoothness of the Gumbel-softmax operation. When $\tau$ approximates to zero, the Gumbel-softmax operation approximately outputs a one-hot vector. Then we can re-write the loss term in Equation \ref{eq:loss_loss} as
\begin{equation}
\label{eq:loss_final}
    \mathcal{L}(\mathbf{D}^p, \mathbf{D}^n, W) = \sum_t \alpha_t \mathcal{L}(\mathbf{D}^p, \mathbf{\hat{D}}^n_t, W) \approx \sum_{t=1} p_t \mathcal{L}(\mathbf{D}^p, \mathbf{\hat{D}}_t^n, W),
\end{equation}
where $\hat{\mathbf{D}}_{t}^n = \cup_u\hat{\mathbf{D}}_{u,t}^n$ and $\hat{\mathbf{D}}_{u,t}^n$ is the negative samples generated by $\pi_t(u, k)$. In conclusion, the search algorithm becomes end-to-end differentiable after introducing the Gumbel-softmax operation.


\subsubsection{Search Process Optimization}
\label{sec:method_search_process}

In the above subsections, we formulate and transform the instance-level negative sampler search problem to loss-level. Then we transform the loss-level formulation into a neural architecture search problem and introduce the Gumbel-Softmax trick to make the recommendation model end-to-end differentiable. Now we discuss how we optimize the model during the search process. 

In AutoSample, two sets of parameters need to be optimized: the model parameter $W$ contained in scoring function $\mathcal{S}(\cdot, \cdot)$ and the search parameter $\bm{\alpha}$. Note that $p_t$ is directly generated by the Gumbel-Softmax operation based on Equation \ref{eq:probability}. In the naive DARTS~\cite{DARTS} setting, two parameters are iteratively trained over different batches. However, in the implicit recommendation, the dataset only contains positive interactions, and it is hard to directly apply such a method. Inspired by the previous work~\cite{AutoFIS,OptInter}, we jointly train both parameters over the same training batch. Such a process can be formulated in Algorithm~\ref{alg:search}.
\begin{algorithm}
    \caption{The Optimization of Search Process}
    \label{alg:search}
    \begin{algorithmic}[1]
    	\Require Implicit dataset $\mathbf{D}^p$
    	\Ensure the well-trained search parameter $\bm{\alpha}^{*}$ and model parameter $W^{\prime}$
        \While {not converged}
            \State Sample a mini-batch of training data $\mathcal{B}^{p}$ from the implicit
            \Statex \qquad dataset $\mathbf{D}^p$
            \State Generate negative mini-batch $\mathcal{B}^{n}$ given the current 
            \Statex \qquad search parameter $\bm{\alpha}$ and candidate samplers $\Pi$
            \State Calculate the loss $\mathcal{L}(\mathcal{B}^p, \mathcal{B}^n, W)$ given Equation \ref{eq:loss_loss}
            \State Update model parameter $W$ by descending gradient 
            \Statex \qquad $\nabla_{W} \mathcal{L}(\mathcal{B}^p, \mathcal{B}^n, W)$
            \State Update search parameter $\bm{\alpha}$ by descending gradient
            \Statex \qquad $\nabla_{\bm{\alpha}} \mathcal{L}(\mathcal{B}^p, \mathcal{B}^n, W)$
        \EndWhile
    \end{algorithmic}
\end{algorithm}

\subsubsection{Retraining Process}
\label{sec:method_retrain_process}

After obtaining the optimal probability for each sampler $\bm{\alpha}^*$, we retrain the model with the best-performing parameter $W^{'}$ as initialization and freeze $\bm{\alpha}^*$ as the search result for the negative sampler. The idea for keeping the model parameter instead of dropping them as typical neural architecture search methods would is that as $\bm{\alpha}$ gradually converge to the optimal, the model parameter $W$ is also adaptively trained from easy negative instances to hard ones, similar to curriculum learning~\cite{CL-Survey}. It has also been proven that through careful selection of the initialization parameter, a deep recommender system can achieve better performance than random initialized~\cite{OptFS}.
The re-training process can be summarized as Algorithm \ref{alg:retrain}. 

\begin{algorithm}
\caption{The Re-training Process}
    \label{alg:retrain}
	\begin{algorithmic}[1]
        \Require Implicit dataset $\mathbf{D}^p$, optimal search parameter $\bm{\alpha}^*$, initialization model parameter $W^{\prime}$
        \Ensure the well-trained model parameter $W^*$
        \State select the optimal negative sampler $\pi^*$ given the optimal search parameter $\bm{\alpha}^*$ and candidate samplers $\Pi$
        \State Re-initialize the model with parameter $W^{\prime}$
        \While {not converged}
            \State Sample a mini-batch of training data $\mathcal{B}^{p}$ from the implicit
            \Statex \qquad dataset $\mathbf{D}^p$
            \State Generate negative mini-batch $\mathcal{B}^{n}$ given sampler $\pi^*$
            \State Calculate the loss $\mathcal{L}(\mathcal{B}^p, \mathcal{B}^n, W)$ given Equation \ref{eq:loss_rec}
            \State Update model parameter $W$ by descending gradient 
            \Statex \qquad $\nabla_{W} \mathcal{L}(\mathcal{B}^p, \mathcal{B}^n, W)$
        \EndWhile
	\end{algorithmic}
\end{algorithm}

\section{Experiment}

In this section, to comprehensively evaluate our proposed framework, we design experiments to answer the following research questions: 
\begin{itemize}[topsep=0pt,noitemsep,nolistsep,leftmargin=*]
    \item \textbf{RQ1}: Can the AutoSampler outperform all its candidate samplers and other SOTA baselines?
    \item \textbf{RQ2}: Does the increasing amount of negative instances cause the improvement of AutoSampler? 
    \item \textbf{RQ3}: Is our selection algorithm more efficient than exhaustive search? 
    \item \textbf{RQ4}: How effective is our two-stage learning algorithm?
    \item \textbf{RQ5}: What kind of samplers does our method perform?
\end{itemize}

\subsection{Experimental Details}

\subsubsection{Dataset Description}

In this paper, we evaluate our method on the following benchmark datasets: Taobao2014\footnote{https://tianchi.aliyun.com/dataset/46}, Taobao2015\footnote{https://tianchi.aliyun.com/dataset/53}, Alibaba~\cite{MCNS} and Amazon\footnote{http://jmcauley.ucsd.edu/data/amazon/links.html}. We randomly split the datasets into training, validation and testing sets by 3:1:1. The statistics and descriptions of these datasets are summarized in Appendix \ref{ref:appendix_exp_dataset}. 

\subsubsection{Evaluation Metrics}
We choose four widely-adopted evaluation metrics: Recall@K(R@K), NDCG@K(N@K), Precision@K(P@K) and Hit Ratio@K(H@K). We choose K=20 as the default setting. 

\subsubsection{Recommendation Models} 
All experiments are performed on the three basic recommendation models: Matrix Factorization~\cite{BPR}, LightGCN~\cite{LightGCN} and NGCF~\cite{NGCF}. We further compare our methods with six negative sampling baselines: RNS~\cite{BPR}, PNS~\cite{PNS}, AOBPR~\cite{AOBPR}, DNS~\cite{DNS}, SRNS~\cite{SRNS} and MixGCF~\cite{MixGCF}. Notes that we exclude commonly-used adversarial hard negative methods like IRGAN~\cite{IRGAN} or AdvIR~\cite{AdvIR} as they are constantly outperformed by SRNS~\cite{SRNS} and MixGCF~\cite{MixGCF}. Details about the recommendation models and negative sampling baselines are shown in Appendix \ref{ref:appendix_exp_rec} and \ref{ref:appendix_exp_baseline}. We also present the reproducibility details in Appendix \ref{ref:appendix_exp_repro}.

\begin{table*}[!htbp]
    \centering
    \caption{Overall Performance Comparison}
    \resizebox{.97\textwidth}{!}{
    \begin{tabular}{c|c|cccc|c|cccc|c}
    \hline
         & \multirow{2}{*}{Sampler} & \multicolumn{5}{c|}{Taobao2014} & \multicolumn{5}{c}{Taobao2015} \\
    \cline{3-12}
         & & R@20 & N@20 & P@20 & H@20 & Rank & R@20 & N@20 & P@20 & H@20 & Rank  \\
    \hline
    \multirow{6}{*}{\rotatebox{90}{MF}}
        & RNS   & 0.0238(2) & 0.0247(1) & 0.0155(1) & 0.2403(1) & 1.25 & 0.0627(2) & 0.0412(2) & 0.0094(2) & 0.1701(2) & 2.25 \\
        & PNS   & 0.0220(4) & 0.0212(4) & 0.0128(4) & 0.2013(4) & 4    & 0.0421(6) & 0.0248(6) & 0.0065(6) & 0.1170(6) & 6 \\
        & DNS   & 0.0093(6) & 0.0093(6) & 0.0059(6) & 0.1044(6) & 6    & 0.0556(4) & 0.0406(4) & 0.0082(4) & 0.1516(4) & 4 \\
        & AOBPR & 0.0111(5) & 0.0115(5) & 0.0073(5) & 0.1261(5) & 5    & 0.0516(5) & 0.0375(5) & 0.0077(5) & 0.1417(5) & 5 \\
        & SRNS  & 0.0221(3) & 0.0221(3) & 0.0142(3) & 0.2207(3) & 3    & 0.0615(3) & 0.0417(2) & 0.0092(3) & 0.1669(3) & 2.75 \\
    \cline{2-12}
        & AutoSample & 0.0243$^*$(1) & 0.0242(2) & 0.0153(2) & 0.2355(2) & 1.75 & 0.0688$^*$(1) & 0.0449$^*$(1) & 0.0103$^*$(1) & 0.1854$^*$(1) & 1  \\
    \hline
    \multirow{7}{*}{\rotatebox{90}{LightGCN}}
        & RNS   & 0.0327(1) & 0.0328(1) & 0.0205(2) & 0.2959(2) & 1.5 & 0.0696(4) & 0.0479(4) & 0.0105(4) & 0.1873(4) & 4 \\
        & PNS   & 0.0267(5) & 0.0264(5) & 0.0170(5) & 0.2535(5) & 5   & 0.0652(5) & 0.0454(5) & 0.0097(5) & 0.1757(5) & 5 \\
        & DNS   & 0.0090(7) & 0.0091(7) & 0.0057(7) & 0.1018(7) & 6   & 0.0554(7) & 0.0405(7) & 0.0081(7) & 0.1510(7) & 7 \\
        & AOBPR & 0.0133(6) & 0.0144(6) & 0.0099(6) & 0.1628(6) & 6   & 0.0591(6) & 0.0426(6) & 0.0088(6) & 0.1602(6) & 6 \\
        & SRNS  & 0.0316(3) & 0.0307(3) & 0.0192(3) & 0.2847(3) & 3   & 0.0716(3) & 0.0489(3) & 0.0107(3) & 0.1917(3) & 3 \\
        & MixGCF& 0.0288(4) & 0.0297(4) & 0.0186(4) & 0.2755(4) & 4   & 0.0755(1) & 0.0504(1) & 0.0113(1) & 0.2010(1) & 1 \\
    \cline{2-12}
        & AutoSample & 0.0321(2) & 0.0323(2) & 0.0206(1) & 0.2982$^*$(1) & 1.5 & 0.0725(2) & 0.0493(2) & 0.0108(2) & 0.1932(2) & 2 \\ 
    \hline
    \multirow{7}{*}{\rotatebox{90}{NGCF}}
        & RNS   & 0.0270(3) & 0.0263(2) & 0.0173(2) & 0.2609(2) & 2 & 0.0572(4) & 0.0356(4) & 0.0087(3) & 0.1557(2) & 3 \\
        & PNS   & 0.0268(4) & 0.0245(4) & 0.0153(4) & 0.2378(4) & 4 & 0.0395(7) & 0.0225(6) & 0.0062(6) & 0.1118(5) & 5.75 \\
        & DNS   & 0.0088(7) & 0.0089(7) & 0.0056(7) & 0.1025(7) & 7 & 0.0556(5) & 0.0407(3) & 0.0082(4) & 0.1517(3) & 3.5 \\
        & AOBPR & 0.0100(6) & 0.0108(6) & 0.0076(6) & 0.1305(6) & 6 & 0.0452(6) & 0.0273(5) & 0.0068(5) & 0.1260(4) & 4.75 \\
        & SRNS  & 0.0227(5) & 0.0206(5) & 0.0129(5) & 0.2023(5) & 5 & 0.0594(3) & 0.0390(3) & 0.0089(3) & 0.1615(3) & 3 \\
        & MixGCF& 0.0274(2) & 0.0259(3) & 0.0166(3) & 0.2501(3) & 3 & 0.0614(2) & 0.0429(2) & 0.0092(2) & 0.1671(2) & 2 \\
    \cline{2-12}
        & AutoSample & 0.0308$^*$(1) & 0.0293$^*$(1) & 0.0187$^*$(1) & 0.2747$^*$(1) & 1 & 0.0706$^*$(1) & 0.0481$^*$(1) & 0.0106$^*$(1) & 0.1904$^*$(1) & 1 \\
    \hline
    \hline
         & \multirow{2}{*}{Sampler} & \multicolumn{5}{c|}{Alibaba} & \multicolumn{5}{c}{Amazon} \\
    \cline{3-12}
         & & R@20 & N@20 & P@20 & H@20 & Rank & R@20 & N@20 & P@20 & H@20 & Rank  \\
    \hline
    \multirow{6}{*}{\rotatebox{90}{MF}}
        & RNS   & 0.0398(3) & 0.0178(3) & 0.0024(2) & 0.0462(3) & 3     & 0.0310(3) & 0.0140(3) & 0.0018(2) & 0.0355(3) & 2.75  \\
        & PNS   & 0.0410(2) & 0.0187(2) & 0.0024(2) & 0.0475(2) & 2     & 0.0317(2) & 0.0147(2) & 0.0018(2) & 0.0358(2) & 2     \\
        & DNS   & 0.0201(4) & 0.0084(4) & 0.0013(4) & 0.0253(4) & 4     & 0.0217(4) & 0.0073(5) & 0.0012(4) & 0.0244(4) & 4.25  \\
        & AOBPR & 0.0115(6) & 0.0047(6) & 0.0007(5) & 0.0144(5) & 5.5   & 0.0161(6) & 0.0066(6) & 0.0009(6) & 0.0177(6) & 6     \\
        & SRNS  & 0.0124(5) & 0.0056(5) & 0.0007(5) & 0.0142(6) & 5.25  & 0.0198(5) & 0.0098(4) & 0.0011(5) & 0.0222(5) & 4.75  \\
    \cline{2-12}
        & AutoSample & 0.0465$^*$(1) & 0.0211$^*$(1) & 0.0028$^*$(1) & 0.0543$^*$(1) & 1 & 0.0354$^*$(1) & 0.0155$^*$(1) & 0.0020(1) & 0.0391$^*$(1) & 1  \\
    \hline
    \multirow{7}{*}{\rotatebox{90}{LightGCN}}
        & RNS   & 0.0630(2) & 0.0284(3) & 0.0038(1) & 0.0735(2) & 2     & 0.0416(2) & 0.0182(3) & 0.0023(2) & 0.0465(2) & 2     \\
        & PNS   & 0.0587(4) & 0.0270(4) & 0.0035(4) & 0.0688(4) & 4     & 0.0303(5) & 0.0141(4) & 0.0017(5) & 0.0341(5) & 4.75  \\
        & DNS   & 0.0234(7) & 0.0090(7) & 0.0015(7) & 0.0290(7) & 7     & 0.0219(6) & 0.0072(6) & 0.0012(6) & 0.0246(6) & 6     \\
        & AOBPR & 0.0336(6) & 0.0149(6) & 0.0022(6) & 0.0423(6) & 6     & 0.0309(4) & 0.0131(5) & 0.0018(4) & 0.0352(4) & 4.25  \\
        & SRNS  & 0.0479(5) & 0.0222(5) & 0.0030(5) & 0.0575(5) & 5     & 0.0173(7) & 0.0069(7) & 0.0010(7) & 0.0202(7) & 7  \\
        & MixGCF& 0.0630(2) & 0.0305(1) & 0.0037(3) & 0.0721(3) & 2.25  & 0.0403(3) & 0.0189(2) & 0.0022(3) & 0.0444(3) & 2.75  \\
    \cline{2-12}
        & AutoSample & 0.0636$^*$(1) & 0.0289(2) & 0.0038(1) & 0.0745$^*$(1) & 1.25  & 0.0447$^*$(1) & 0.0203$^*$(1) & 0.0025$^*$(1) & 0.0499$^*$(1) & 1     \\
    \hline
    \multirow{7}{*}{\rotatebox{90}{NGCF}}
        & RNS   & 0.0464(3) & 0.0204(3) & 0.0028(2) & 0.0548(3) & 2.75  & 0.0269(5) & 0.0114(5) & 0.0015(5) & 0.0304(5) & 5 \\
        & PNS   & 0.0369(4) & 0.0160(4) & 0.0021(4) & 0.0422(4) & 4     & 0.0289(4) & 0.0128(4) & 0.0016(4) & 0.0324(4) & 4 \\
        & DNS   & 0.0209(5) & 0.0090(5) & 0.0013(5) & 0.0259(5) & 5     & 0.0220(7) & 0.0070(7) & 0.0012(7) & 0.0247(7) & 7 \\
        & AOBPR & 0.0120(7) & 0.0051(7) & 0.0008(7) & 0.0150(7) & 7     & 0.0222(6) & 0.0091(6) & 0.0013(6) & 0.0258(6) & 6 \\
        & SRNS  & 0.0188(6) & 0.0086(6) & 0.0011(6) & 0.0220(6) & 6     & 0.0316(3) & 0.0132(3) & 0.0018(3) & 0.0349(3) & 3\\
        & MixGCF& 0.0473(2) & 0.0234(2) & 0.0028(2) & 0.0550(2) & 2     & 0.0331(1) & 0.0148(1) & 0.0026(1) & 0.0507(1) & 1 \\
    \cline{2-12}
        & AutoSample & 0.0485$^*$(1) & 0.0246$^*$(1) & 0.0029(1) & 0.0566$^*$(1) & 1     & 0.0329(2) & 0.0137(2) & 0.0019(2) & 0.0367(2) & 2 \\
    \hline
    \end{tabular}
    }
    \begin{tablenotes}
    \footnotesize
    \item [1] $^*$ denotes improvements over baselines are statistically significant with p < 0.05.
    \end{tablenotes}
    \label{tab:main}
\end{table*}

\subsection{Overall Performance(RQ1)}

The overall performance of our AutoSample method and other baselines on three basic models and four benchmark datasets are reported in Table \ref{tab:main}. We summarize our observations below.

First, we can observe that our AutoSample consistently outperforms other negative samplers in the experiments. Considering that the four metrics we employed evaluate various aspects of the results, we assess the performance based on the average rank across the four metrics. Our AutoSample ranks first in 9 out of 12 cases, and even in the remaining three cases, it secures the second rank. These results highlight the superiority of our AutoSample method in terms of model performance.

Furthermore, we can observe that solely targeting hard negative samples does not guarantee constant improvements in all cases. The best-performing baseline methods (excluding our AutoSample) vary from heuristic-sampling methods, like RNS or PNS, to hard negative sampling method, like MixGCF, among all 12 cases. Such an observation validates our hypothesis that the negative sampler needs to be carefully selected given the model and dataset, which motivates the automated negative sampling problem.

Finally, it is surprising to find out that even though AutoSample limits its search space to classic methods, as stated in Section 
\ref{sec:method_search_space}, it can surpass state-of-the-art sampling methods like SRNS or MixGCF. This finding emphasizes the potential benefits of transforming traditional negative sampling, shown in Equation \ref{eq:loss_rec}, into an automated negative sampling problem, shown in Equation \ref{eq:loss_sample}. By introducing more flexibility and adaptability into the negative sampling process, improved performance can be achieved.

\subsection{Ablation on The Amount of Negative Samples(RQ2)}

\begin{table*}[!htbp]
    \centering
    \caption{Ablation over the Number of Negative Samples on Taobao2015 Dataset}
    \label{tab:num_neg}
    \resizebox{.95\textwidth}{!}{
    \begin{tabular}{c|c|cccc|c|cccc|c}
    \hline
        & \multirow{2}{*}{Sampler} & \multicolumn{5}{c|}{T = 2} & \multicolumn{5}{c}{T = 3} \\
    \cline{3-12}
        & & R@20 & N@20 & P@20 & H@20 & Rank & R@20 & N@20 & P@20 & H@20 & Rank \\
    \hline
        \multirow{5}{*}{\rotatebox{90}{MF}} 
        & RNS       & 0.0629(2) & 0.0419(2) & 0.0094(2) & 0.1706(2) & 2 & 0.0647(2) & 0.0437(2) & 0.0097(2) & 0.1753(2) & 2 \\
        & PNS       & 0.0366(5) & 0.0214(5) & 0.0057(5) & 0.1031(5) & 5 & 0.0388(5) & 0.0229(5) & 0.0060(5) & 0.1079(5) & 5 \\
        & DNS       & 0.0554(3) & 0.0405(3) & 0.0081(3) & 0.1510(3) & 3 & 0.0555(3) & 0.0405(3) & 0.0081(3) & 0.1511(3) & 3 \\
        & AOBPR     & 0.0538(4) & 0.0389(4) & 0.0079(4) & 0.1469(4) & 4 & 0.0525(4) & 0.0387(4) & 0.0078(4) & 0.1445(4) & 4 \\
    \cline{2-12}
        & AutoSample & 0.0688$^*$(1) & 0.0449$^*$(1) & 0.0103$^*$(1) & 0.1854$^*$(1) & 1 & 0.0663$^*$(1) & 0.0455$^*$(1) & 0.0099(1) & 0.1793$^*$(1) & 1 \\
    \hline
        \multirow{5}{*}{\rotatebox{90}{LightGCN}} 
        & RNS       & 0.0692(2) & 0.0478(2) & 0.0103(2) & 0.1856(2) & 2     & 0.0695(1) & 0.0479(1) & 0.0104(1) & 0.1867(1) & 1 \\
        & PNS       & 0.0536(5) & 0.0343(5) & 0.0082(4) & 0.1461(5) & 4.75  & 0.0499(5) & 0.0307(5) & 0.0077(5) & 0.1365(5) & 5 \\
        & DNS       & 0.0554(4) & 0.0405(4) & 0.0081(5) & 0.1509(4) & 4.25  & 0.0553(4) & 0.0405(4) & 0.0081(4) & 0.1509(4) & 4 \\
        & AOBPR     & 0.0643(3) & 0.0449(3) & 0.0095(3) & 0.1736(3) & 3     & 0.0661(2) & 0.0456(2) & 0.0098(2) & 0.1780(2) & 2 \\
    \cline{2-12}
        & AutoSample & 0.0725$^*$(1) & 0.0493$^*$(1) & 0.0108$^*$(1) & 0.1932$^*$(1) & 1     & 0.0644(3) & 0.0449(3) & 0.0095(3) & 0.1732(3) & 3 \\
    \hline
        \multirow{5}{*}{\rotatebox{90}{NGCF}} 
        & RNS       & 0.0573(2) & 0.0362(3) & 0.0086(2) & 0.1555(2) & 2.25  & 0.0571(2) & 0.0360(4) & 0.0087(2) & 0.1554(2) & 2.5  \\
        & PNS       & 0.0399(5) & 0.0229(5) & 0.0063(5) & 0.1115(5) & 5     & 0.0427(5) & 0.0248(5) & 0.0067(5) & 0.1188(5) & 5 \\
        & DNS       & 0.0553(3) & 0.0405(2) & 0.0081(4) & 0.1506(3) & 3     & 0.0554(4) & 0.0406(2) & 0.0081(4) & 0.1512(3) & 3.25 \\
        & AOBPR     & 0.0545(4) & 0.0350(4) & 0.0082(3) & 0.1494(4) & 3.75  & 0.0548(3) & 0.0363(3) & 0.0082(3) & 0.1502(4) & 3.75 \\
    \cline{2-12}
        & AutoSample & 0.0706$^*$(1) & 0.0481$^*$(1) & 0.0106$^*$(1) & 0.1904$^*$(1) & 1    & 0.0638$^*$(1) & 0.0437$^*$(1) & 0.0094$^*$(1) & 0.1725$^*$(1) & 1 \\
    \hline
    \end{tabular}
    }
    \begin{tablenotes}
    \footnotesize
    \item [1] $^*$ denotes statistically significant improvements over baselines with p < 0.05.
    \end{tablenotes}
\end{table*}

In Table \ref{tab:main}, AutoSample constantly performs better than other baselines in most cases. But cautious readers may question whether such performance improvement is solely due to the increase in the number of negative instances, as AutoSample introduces $T$ times extra negative instances. This can be observed from Equation \ref{eq:loss_loss}, where AutoSample utilizes different samplers to independently sample negative instances and merge them into the sample loss function. In this section, we aim to address this question by comparing baselines with the same amount of negative instances.

We experiment on the Taobao2015 dataset using three basic models. As stated in Section \ref{sec:method_search_space}, the candidate negative samplers top-$T$ ranked ones in terms of performance. To be specific, for cases where $T = 2$, we adopt $\Pi = [\text{RNS}, \text{DNS}]$ for CF and NGCF model, $\Pi = [\text{RNS}, \text{PNS}]$ for LightGCN model. For cases where $T = 3$, we adopt $\Pi = [\text{RNS}, \text{PNS}, \text{DNS}]$ for all three models. 

As observed in Table \ref{tab:num_neg}, our AutoSample achieves the best performance in 5 out of 6 cases, which demonstrates that the performance improvement of AutoSample is not solely due to the increase in the number of negative instances, but rather the result of a better alignment between the sampler and other components.

It is also interesting to notice that the influence of increasing the number of negative instances varies. For instance, as $T$ increases, the performance of AOBPR decreases for the CF model while it increases for the LightGCN model. This phenomenon indicates that increasing the number of negative samples does not necessarily lead to improved performance.

\subsection{Ablation on the Search Efficiency(RQ3)}

In this section, we conduct an ablation study of the search efficiency of AutoSample. We measure the search efficiency by the total training time, which includes the time for obtaining the optimal search parameter $\alpha^*$ and the corresponding model parameter $W^*$. We compare our method with a grid search that exhaustively runs all candidate negative samplers. The result is shown in Figure \ref{fig:Time}. We can easily observe that our search algorithm can more efficiently explore the search space than an exhaustive grid search.

\begin{figure}[!htbp]
    \centering
    \subfigure[Alibaba]{
    \begin{minipage}[t]{0.2\textwidth}
    \centering
    \includegraphics[width=\textwidth]{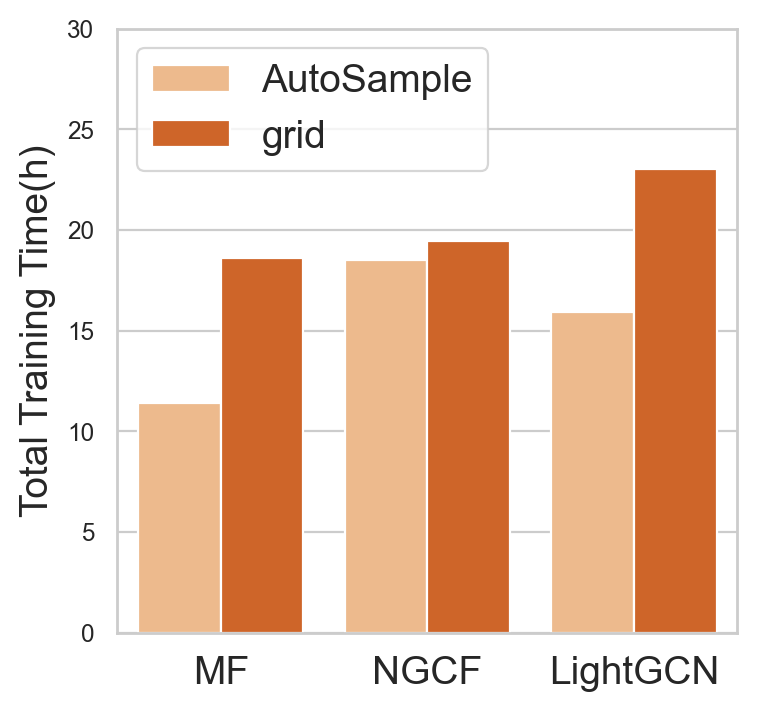}
    \label{fig:Time-ali}
    \end{minipage}
    }
    \subfigure[Amazon]{
    \begin{minipage}[t]{0.2\textwidth}
    \centering
    \includegraphics[width=\textwidth]{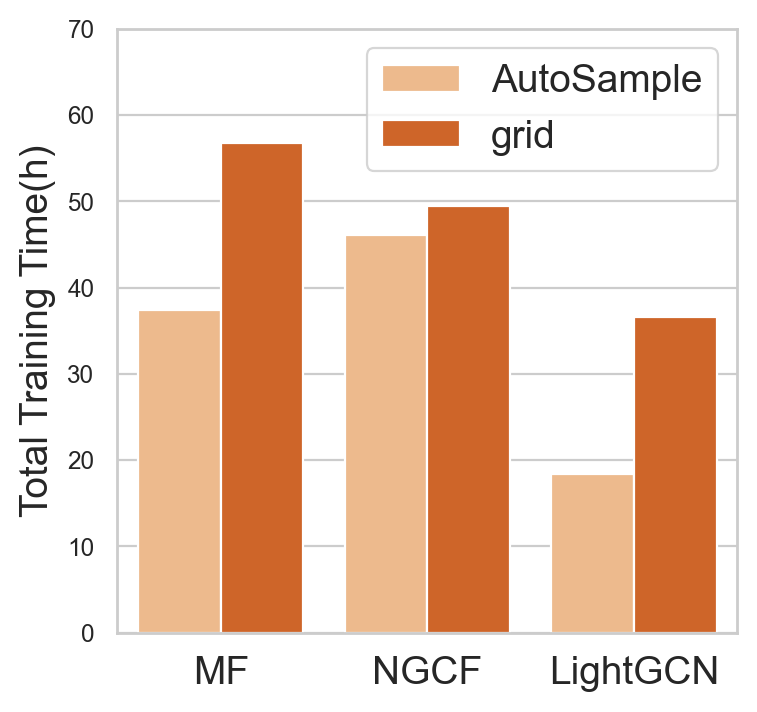}
    \label{fig:Time-amazon}
    \end{minipage}
    }
    \caption{Total Training Time on Alibaba and Amazon.}
    \label{fig:Time}
\end{figure}

\subsection{Ablation on Search Space(RQ4)}
\label{sec:search_space}

In this section, we further investigate the influence of search space over the final result. As we stated in Section \ref{sec:method_search_space}, we select top-$T$ negative samplers based on their performance as candidate samplers for AutoSample. We aim to validate the effectiveness of this selection criteria empirically.

As we can observe from Table \ref{tab:search_space}, the rank of the corresponding AutoSample positively correlates with the average rank of the search space, denoted as \textit{S.S.A.R.}. Here we calculate the \textit{S.S.A.R.} as
$\textit{S.S.A.R.} = \sum_{i=t}^{\text{T}} \text{rank}(\pi_t) / \text{T},$ 
where $\text{rank}(\pi_t)$ refers to the rank of the negative sampler $\pi_t$ and is also reported in Table \ref{tab:main}.
This empirical finding provides further evidence for the feasibility and effectiveness of our selection criteria discussed in the previous section.

\subsection{Ablation on Retraining Scheme(RQ4)}

In this section, we investigate the influence of the retraining scheme mentioned in Section \ref{sec:method_retrain_process}. We conduct experiments on the Taobao2015 and Amazon datasets using all three basic models. We compare our customized retraining scheme (referred to as \textit{custom}) with random initialization (referred to as \textit{random}) in Table \ref{tab:retrain}. As evident from the results, the customized retraining scheme, which utilizes the best-performing parameters from the search process as model initialization, consistently outperforms random initialization. This ablation study establishes the feasibility of our retraining scheme.

\begin{table}[!htbp]
    \centering
    \caption{Ablation over Search Space on Taobao2015 Dataset.}
    \label{tab:search_space}
    \begin{tabular}{c|c|ccc|c}
    \hline
        & \multirow{2}{*}{Metric} & \multicolumn{3}{c|}{T=2} & T=3 \\
    \cline{3-6}
        & & R+P & R+D & D+AO & R+P+D  \\
    \hline
        \multirow{6}{*}{\rotatebox{90}{MF}} 
        & R@20 & 0.0654(3) & 0.0688(1) & 0.0556(4) & 0.0663(2) \\
        & N@20 & 0.0432(3) & 0.0449(2) & 0.0406(4) & 0.0455(1) \\
        & P@20 & 0.0099(2) & 0.0103(1) & 0.0082(4) & 0.0099(2) \\
        & H@20 & 0.1734(3) & 0.1854(1) & 0.1516(4) & 0.1793(2) \\
        & Rank & 2.75 & 1.25 & 4 & 1.75 \\
        & S.S.A.R. & 4.13 & 3.13 & 4.5  & 4.08 \\
    \hline
        \multirow{6}{*}{\rotatebox{90}{LightGCN}} 
        & R@20 & 0.0725(1) & 0.0554(3) & 0.0554(3) & 0.0644(2) \\ 
        & N@20 & 0.0493(1) & 0.0405(3) & 0.0405(3) & 0.0449(2) \\ 
        & P@20 & 0.0108(1) & 0.0081(3) & 0.0081(3) & 0.0095(2) \\ 
        & H@20 & 0.1932(1) & 0.1511(3) & 0.1510(4) & 0.1732(2) \\
        & Rank & 1 & 3 & 3.25 & 2 \\
        & S.S.A.R. & 4.25 & 5.5 & 6.5 & 5.33 \\
    \hline
        \multirow{6}{*}{\rotatebox{90}{NGCF}}
        & R@20 & 0.0655(2) & 0.0706(1) & 0.0550(4) & 0.0638(3) \\
        & N@20 & 0.0437(2) & 0.0481(1) & 0.0403(4) & 0.0437(2) \\
        & P@20 & 0.0098(2) & 0.0106(1) & 0.0081(4) & 0.0094(3) \\
        & H@20 & 0.1768(2) & 0.1904(1) & 0.1503(4) & 0.1725(3) \\
        & Rank & 2 & 1 & 4 & 2.75 \\
        & S.S.A.R. & 4.38 & 3.25 & 4.13 & 4.08 \\
    \hline
    \end{tabular}
    \begin{tablenotes}
    \footnotesize
    \item [1] Here \textit{R}, \textit{P}, \textit{D}, and \textit{AO} stand for RNS, PNS, DNS and AOBPR, respectively. \textit{S.S.A.R.} is an abbreviation for search space average rank.
    \end{tablenotes}
    \label{tab:my_label}
\end{table}

\subsection{Case Study(RQ5)}
In this section, we perform a case study to examine the changes in search parameter $\bm{\alpha}$ during the search process. The evolving trends of $\bm{\alpha}$ at each epoch are reported in Figure \ref{fig:case}. As observed, the evolving patterns of $\bm{\alpha}$ vary across different datasets and models. For cases like NGCF on Alibaba and LightGCN on Amazon, AutoSample exhibits a curriculum learning behaviour, gradually adjusting the negative instances to facilitate better learning. In contrast, for other cases, AutoSample determines the optimal sampler at an early stage, resembling a neural architecture search approach, and trains the model given the searched sampler in the later epochs.

\begin{table}[!htbp]
    \centering
    \caption{Ablation over Retraining Scheme on Taobao2015 and Amazon Datasets.}
    \label{tab:retrain}
    \begin{tabular}{c|c|cc|cc}
    \hline
    & \multirow{2}{*}{Metric} & \multicolumn{2}{c|}{Taobao2015} & \multicolumn{2}{c}{Amazon} \\
    \cline{3-6}
    & & \textit{random} & \textit{custom} & \textit{random} & \textit{custom} \\
    \hline
    \multirow{4}{*}{\rotatebox{90}{MF}} 
    & R@20 & 0.0556 & 0.0688 & 0.0281 & 0.0354 \\
    & N@20 & 0.0406 & 0.0449 & 0.0123 & 0.0155 \\
    & P@20 & 0.0082 & 0.0103 & 0.0016 & 0.0020 \\
    & H@20 & 0.1516 & 0.1854 & 0.0323 & 0.0391 \\
    \hline
    \multirow{4}{*}{\rotatebox{90}{LightGCN}} 
    & R@20 & 0.0696 & 0.0725 & 0.0416 & 0.0447 \\
    & N@20 & 0.0479 & 0.0493 & 0.0182 & 0.0203 \\
    & P@20 & 0.0105 & 0.0108 & 0.0023 & 0.0025 \\
    & H@20 & 0.1873 & 0.1932 & 0.0465 & 0.0499 \\
    \hline
    \multirow{4}{*}{\rotatebox{90}{NGCF}} 
    & R@20 & 0.0553 & 0.0706 & 0.0269 & 0.0329 \\
    & N@20 & 0.0405 & 0.0481 & 0.0114 & 0.0137 \\
    & P@20 & 0.0081 & 0.0106 & 0.0015 & 0.0019 \\
    & H@20 & 0.1509 & 0.1904 & 0.0304 & 0.0367 \\
    \hline
    \end{tabular}
\end{table}

\begin{figure}[!htbp]
    \centering
    \begin{minipage}[t]{0.2\textwidth}
    \centering
    \includegraphics[width=\textwidth]{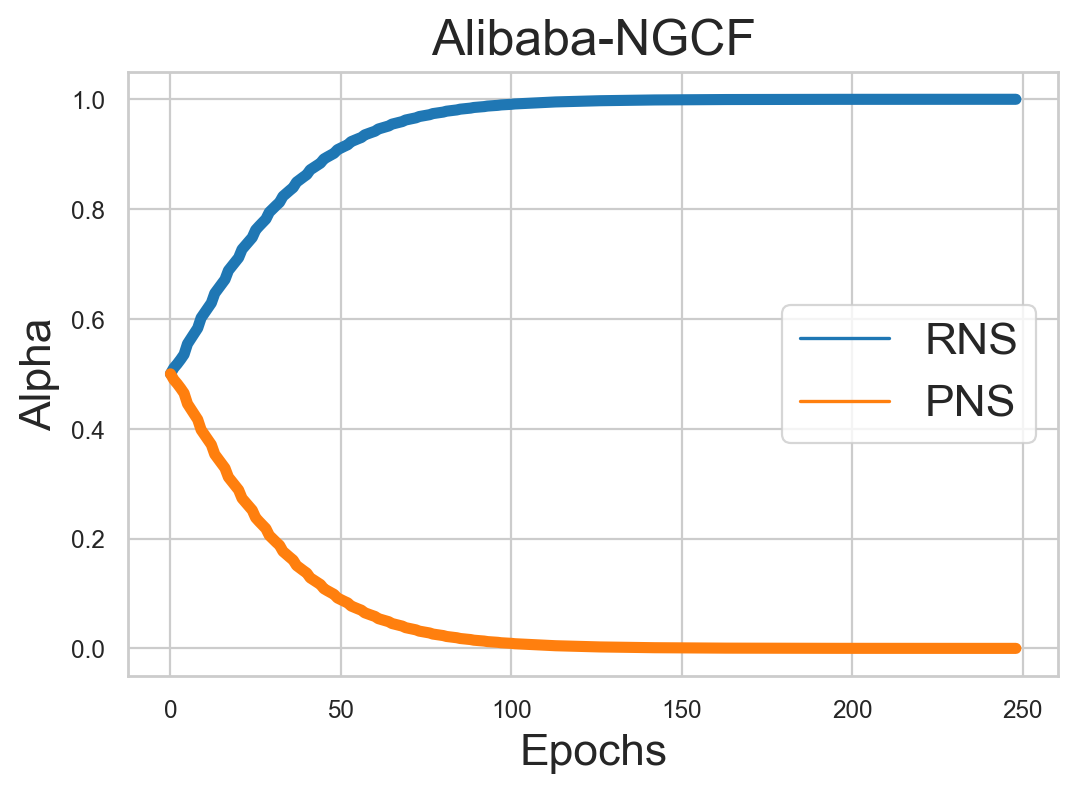}
    \label{fig:case-ali-ngcf}
    \end{minipage}
    \begin{minipage}[t]{0.2\textwidth}
    \centering
    \includegraphics[width=\textwidth]{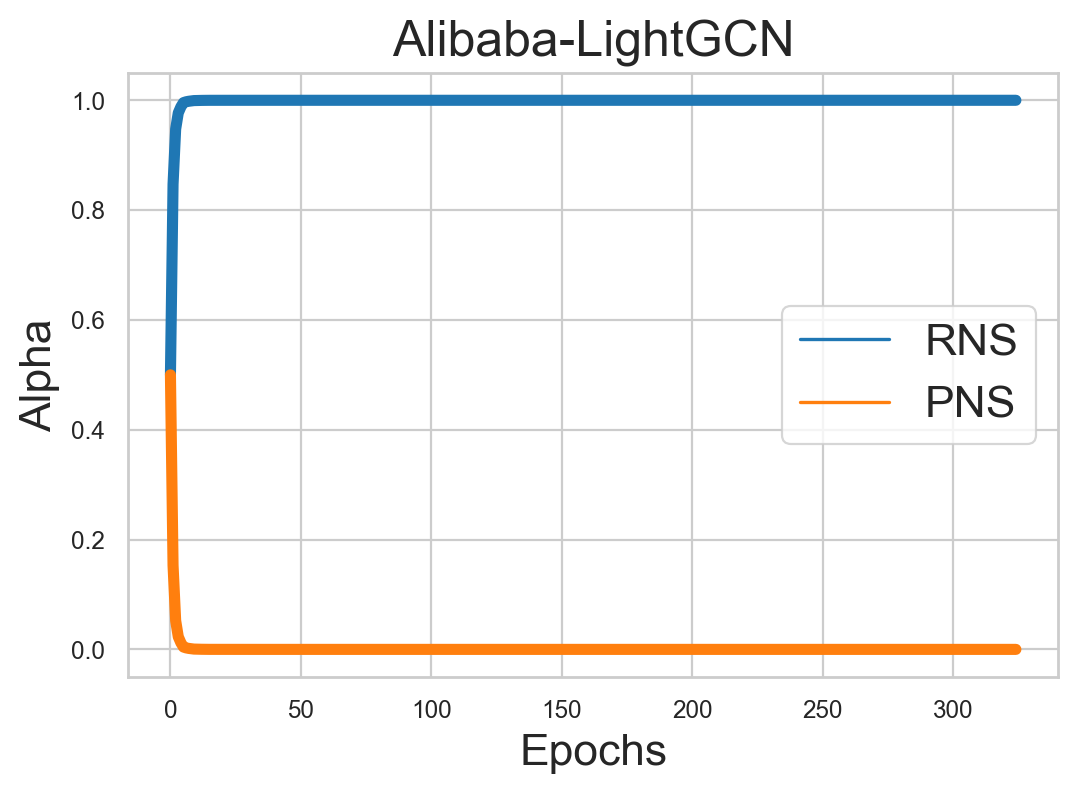}
    \label{fig:case-ali-lightgcn}
    \end{minipage}
    \begin{minipage}[t]{0.2\textwidth}
    \centering
    \includegraphics[width=\textwidth]{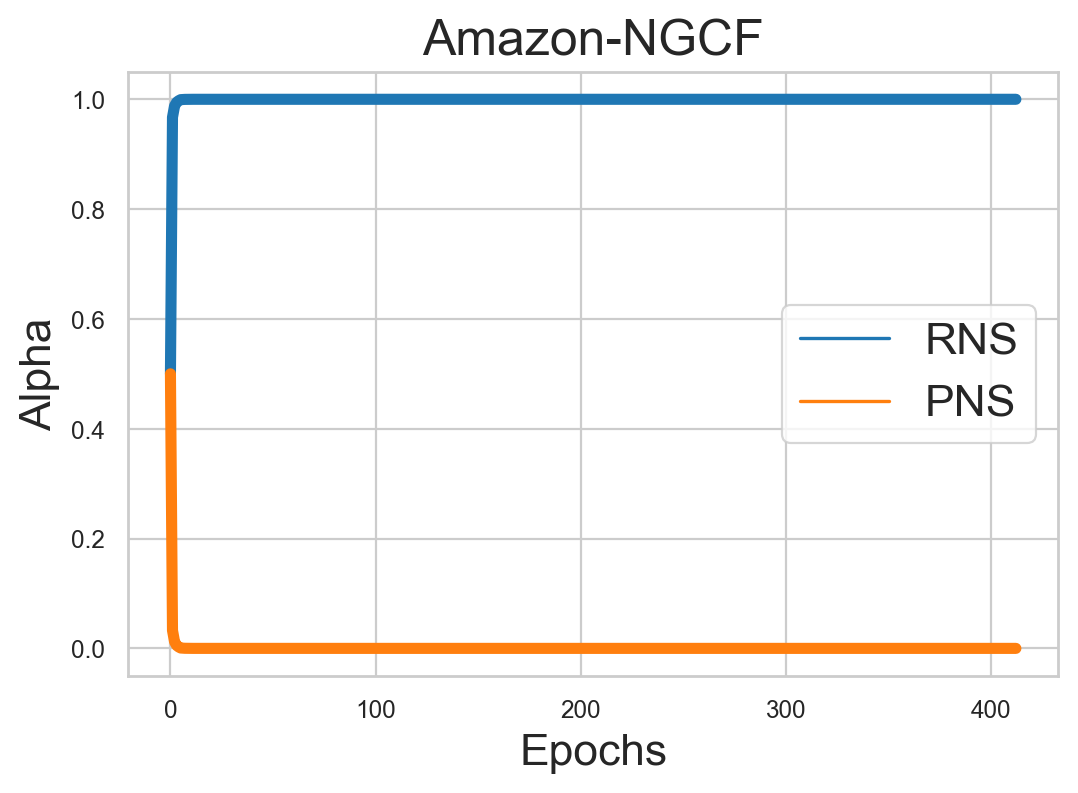}
    \label{fig:case-amazon-ngcf}
    \end{minipage}
    \begin{minipage}[t]{0.2\textwidth}
    \centering
    \includegraphics[width=\textwidth]{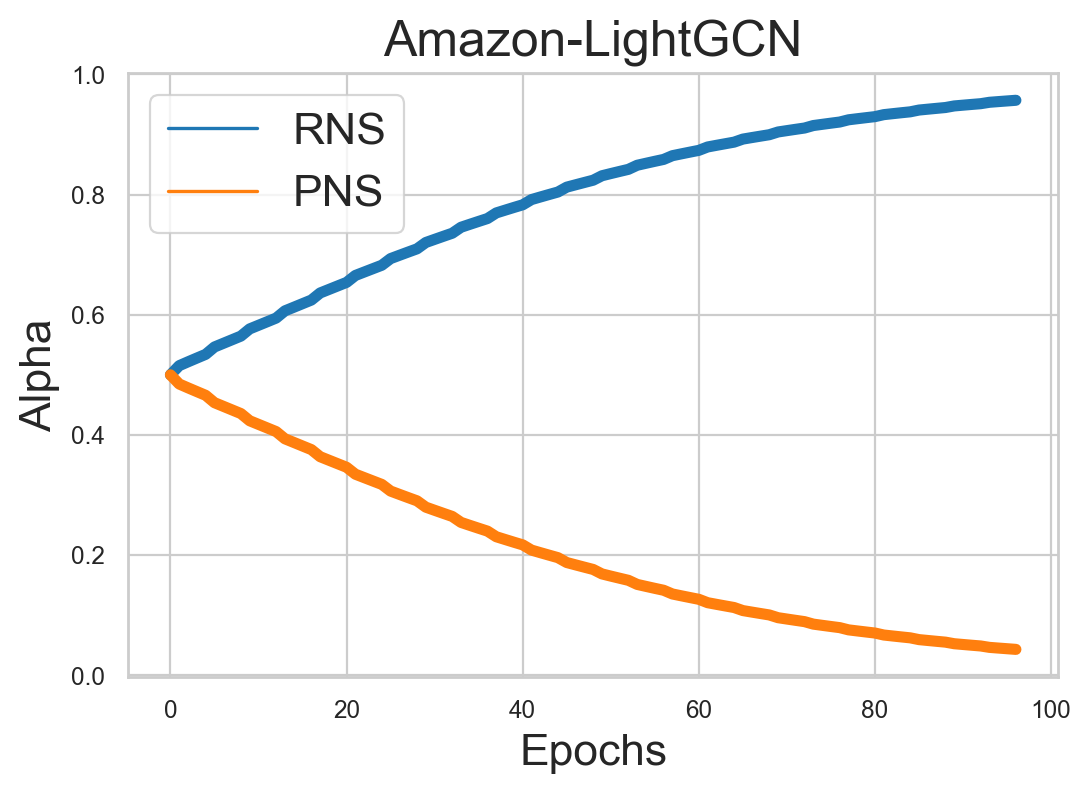}
    \label{fig:case-amazon-lightgcn}
    \end{minipage}
    \caption{Case Study of $\alpha$ on Alibaba and Amazon datasets.}
    \label{fig:case}
\end{figure}

\section{Related Work}

We discuss how our work is situated in two research topics: (i) negative sampling and (ii) automated machine learning and its application in recommender systems. 

\subsection{Negative Sampling}

Negative sampling has been an active topic for both industry and academia. It can mainly be summarized into four classes~\cite{NS-Survey}. 
Heuristic sampling methods rely on specific heuristics to guide the sampling process, including item popularity~\cite{PNS}, uniform sampling~\cite{BPR} or employing Markov chains~\cite{MCNS,NNCF}. 
Hard negative sampling methods, including DNS~\cite{DNS}, AOBPR~\cite{AOBPR}, SRNS~\cite{SRNS}, NSCaching~\cite{NS-Caching}, and ReinforceNS~\cite{ReinforceNS}, assign higher probabilities to instances with significant prediction score, aiming to enhance model training. 
Adversarial Sampling methods, including IRGAN~\cite{IRGAN}, AdvIR~\cite{AdvIR}, and GANRec~\cite{GANRec}, treat the sampling of negative samples as a generative problem and utilize GAN-based methods to improve model training. 
Graph-based sampling methods, such as RWS~\cite{RWS}, RecNS~\cite{RecNS}, and MixGCF~\cite{MixGCF} tend to guide the sampling process depending on the underlying graph topology, typically applied in graph-based models.
Additionally, some approaches treat all unobserved data as negative~\cite{ENMF,Gramian}. However, this aspect is beyond the scope of discussion in this work.

Our work builds upon prior studies and can be combined with any negative sampling method. It intends to choose the optimal negative sampling methods depending on models and datasets.

\subsection{Automated Machine Learning and its Application in Recommender System}

Automated machine learning has emerged as a promising approach to automatically select specific tasks or datasets, eliminating the necessity for human experts~\cite{NAS,NAO,DARTS,Large-Scale-Evolution}. It has substantially advanced various domains, such as computer vision~\cite{NAO,DARTS}, natural language processing~\cite{nas-nlp} and feature representation learning~\cite{FRL-Survey}. There are two key aspects to automated machine learning: \textit{selection space} and \textit{selection algorithm}. The \textit{selection space} comprises all potential choices and is usually task-dependent. The \textit{selection algorithm} aims to explore the selection space efficiently and can generally be categorized into three classes: controller-based~\cite{NAS}, evolution-based~\cite{Large-Scale-Evolution} and gradient-based~\cite{NAO,DARTS}.

Automated machine learning has found widespread application in recommender systems~\cite{DRS-Survey}. It is widely adopted for tasks such as selecting suitable embedding dimensions~\cite{NIS}, identifying informative features~\cite{OptFS}, determining beneficial feature interactions~\cite{AutoFIS}, choosing integration function~\cite{AutoFeature,OptInter} or designing comprehensive architectures~\cite{AutoPI}. 

Our work distinguishes itself from the existing research by addressing the challenge of negative sampling in recommender systems. It represents a distinct direction orthogonal to previous studies, and there is potential for further advancements by combining our approach with existing methodologies.

\section{Conclusion}
\label{sec:conclusion}

In this paper, we first challenge the previous negative sampling schemes which explicitly target the hard negative instances and develop negative samplers orthogonal to the implicit datasets and recommendation models. Inspired by the intuition of neural architecture search, we propose a hypothesis that the negative sampler, which generates the negative instances, should be matched with the positive implicit data and the model. Such a hypothesis is empirically proven as the best-performing negative samples vary as the dataset and model change. To solve such a problem, we formulate the automated negative sampling problem and conduct \textit{instance-to-loss} approximation to make it end-to-end trainable. A method called AutoSample is also introduced to efficiently and effectively solve the problem. Extensive experiment over four benchmark datasets and three recommendation models proves the feasibility of AutoSample. Several ablation studies also investigate different settings in AutoSample. Moreover, we also visualize the search result in AutoSample, suggesting that the AutoSample somehow correlates with curriculum learning in obtaining the model parameters.

\normalem
\bibliographystyle{ACM-Reference-Format}
\bibliography{main.bib}
\appendix
\section{Ethical Considerations}
The proposed AutoSample focuses on sampling 
method in recommendation models. As with other recommendation models, our method may cause filter bubbles or echo chamber phenomena to users of the service. Therefore, when deploying our AutoSample with the recommendation model, it is necessary to combine some existing debiasing techniques to reduce these adverse effects.

\section{Experimental Settings}

\subsection{Dataset Description}
\label{ref:appendix_exp_dataset}

In this paper, we evaluate our method on the following benchmark datasets: Taobao2014, Taobao2015, Alibaba and Amazon. The statistics of these datasets are summarized in Table \ref{table:dataset}. We randomly split the datasets into training, validation and testing sets by 3:1:1. Belows list the detailed description of each dataset:
\begin{itemize}[topsep=0pt,noitemsep,nolistsep,leftmargin=*]
    \item Taobao2014\footnote{https://tianchi.aliyun.com/dataset/46} contains real user behaviour data from 2014 provided by Alibaba Group. We filtered out user and item nodes with fewer than 10 interactions.
    \item Taobao2015\footnote{https://tianchi.aliyun.com/dataset/53} is a real-world dataset accumulated on Tmall / Taobao and the app Alipay in 2015. We filtered out user and item nodes with fewer than 10 interactions.
    \item Alibaba dataset~\cite{MCNS} is constructed based on the data from another large E-commerce platform, which contains the user’s purchase records and items’ attribute information. The data are organized as a user-item graph for the recommendation.
    \item Amazon\footnote{http://jmcauley.ucsd.edu/data/amazon/links.html} is a large E-commerce dataset introduced in previous work~\cite{MCNS,MixGCF}, which contains purchase records and review texts whose time stamps span from May 1996 to July 2014. In the experiments, we take the data from the commonly-used “electronics” category to establish a user-item graph.
\end{itemize}

\begin{table}[htbp]
    \caption{Dataset Statistics}
    \centering
    \begin{tabular}{c|c|c|c|c}
    \hline
        Datasets & \#user & \#item & \#interactions & Density(\textperthousand) \\
    \hline
        Taobao2014  & 8844      & 39103     & 749438        & 2.1671 \\
        Taobao2015  & 92605     & 9842      & 1332602       & 1.4621 \\
        Alibaba     & 106042    & 53591     & 907470        & 0.1597 \\
        Amazon      & 192403    & 63001     & 1689188       & 0.1394 \\
    \hline
    \end{tabular}
    \label{table:dataset}
\end{table}

\subsection{Recommendation Models} 
\label{ref:appendix_exp_rec}
To verify the effectiveness of our method, we perform experiments on the following three recommendation models.

\begin{itemize}[topsep=0pt,noitemsep,nolistsep,leftmargin=*]
    \item Matrix Factorization~\cite{BPR} is a commonly-used model in implicit recommender systems to learn both the users' and items' embedding. It calculates the relevance score between a user and an item by calculating the inner product of their embeddings.
    \item LightGCN~\cite{LightGCN} is a lightweight version of the GCN~\cite{GCN}, customized for recommender system. It only contains the neighbourhood aggregation component for collaborative filtering.
    \item NGCF~\cite{NGCF} is a widely-used graph recommendation model. It proposes to integrate the user-item interactions into the embedding process by injecting the collaborative signal into the embedding process in an explicit manner.
\end{itemize}

\subsection{Baselines}
\label{ref:appendix_exp_baseline}
We compare our proposed method with various baseline methods as follows. For all sampling-based methods, we sample one negative instance for each positive instance following previous works~\cite{IRGAN}. 
\begin{itemize}[topsep=0pt,noitemsep,nolistsep,leftmargin=*]
    \item RNS~\cite{BPR}, or random sampling, selects negative samples from all candidates with a uniform probability.
    \item PNS~\cite{PNS}, or popularity-based sampling, selects negative samples based on the popularity of each item. We set the hyper-parameter $\beta=0.75$ following previous work~\cite{PNS}.
    \item AOBPR~\cite{AOBPR} deliberately selects the hard negative samples with the relevance scores in the recommendation models. It calculates the relevance scores for all unobserved instances and chooses the one with the highest relevance score as the negative instance.
    \item DNS~\cite{DNS}, or dynamic negative sampling, also aims to select the hard negative samples with the relevance scores in the recommendation models. It selects a fixed number of unobserved instances as candidates and chooses the one with the highest relevance score as the negative instance.
    \item SRNS~\cite{SRNS} is a SOTA sampling method that wishes to select hard negative samples. It distinguished the hard negative and false negative instances by the variance of relevance scores and favoured the false negative ones with high variance. And it keeps a designed memory that only stores a few important candidates, which boosts the sampling efficiency.
    \item MixGCF~\cite{MixGCF} is a recent method targeting graph-based negative sampling. Instead of sampling negative instances, it synthesizes them by aggregating the neighbours' embeddings. Notes that for MixGCF, we only compare it on LightGCN and NGCF as it is a method targeting graph-based recommendation models instead of general ones.
\end{itemize}


\subsection{Platform}
All experiments are conducted on a Linux server with 8 Intel Xeon Gold 6154 cores, 64 GB memory and one Nvidia-V100 GPU with PCIe connections.

\subsection{Reproducibility and Hyper-parameters}
\label{ref:appendix_exp_repro}
Our implementation\footnote{https://anonymous.4open.science/r/AutoSample} is adapted from the official PyTorch implementation for the MixGCF baseline\footnote{https://github.com/huangtinglin/MixGCF}. For other baseline methods, we reuse the official implementation for MixGCF and SRNS\footnote{https://github.com/dingjingtao/SRNS/}~\cite{SRNS}. 
We re-implement the RNS~\cite{BPR}, PNS~\cite{PNS}, AOBPR~\cite{AOBPR} and DNS~\cite{DNS} in our implementation due to either lack of official implementations or the official repositories are supported by early-stage deep learning frameworks. 

We also provide the hyper-parameter tuning details for our experiments. For AutoSample, (i) General hyper-parameters: We set the embedding dimension to 64 and batch size to 1024. We set the additional network hyper-parameter introduced by LightGCN and NGCF the same as the previous paper~\cite{MixGCF}. Following previous work~\cite{MixGCF}, Adam optimizer is adopted.
We tune the optimal learning ratio from \{3e-3, 1e-3, 3e-4, 1e-4\} and $l_2$ regularization from \{1e-2, 1e-3, 1e-4, 0\}. (ii) Additional hyper-parameters: For the selection parameter $\alpha$, we tune its learning ratio from \{3e-2, 1e-2, 3e-3, 1e-3, 3e-4, 1e-4, 3e-5, 1e-5\}  and exclude any regularization as we do not want to incorporate prior information. During the re-training process, we reuse the optimal learning ratio and $l_2$ regularization. For MixGCF, we select its optimal hyperparameter reported in the paper. For SRNS, as it was never evaluated on the four benchmarks in our settings, we select the optimal hyperparameter from the same hyperparameter domain of AutoSample. We will report the exact optimal hyperparameters used in the experiments after the paper is accepted.

\end{document}